\newcommand{\bx}{\mathbf{x}}
\newcommand{\hbx}{{\hat{\mathbf{x}}}}
\newcommand{\hby}{{\hat{\mathbf{y}}}}
\newcommand{\hbz}{{\hat{\mathbf{z}}}}
\newcommand{\by}{\mathbf{y}}
\newcommand{\bz}{\mathbf{z}}
\newcommand{\RR}{\mathbb{R}}
\newcommand{\K}{\kappa}
\newcommand{\ds}{\displaystyle}
\newcommand{\bu}{\mathbf{u}}
\newcommand{\bv}{\mathbf{v}}
\def\etab{{\boldsymbol{\xi}}}
\newcommand{\wE}{{\widetilde{\mathbf{E}}}}
\newcommand{\sigmae}{{\sigma_\xi}}
\newtheorem{thm}{Theorem}[section]
\newtheorem{lem}[thm]{Lemma}
\newtheorem{rem}[thm]{Remark}
\begin{document}
\title{Far field imaging of a dielectric inclusion}

\author{Abdul Wahab$^1$, Naveed Ahmed$^2$, Tasawar Abbas$^{3,4}$}

\address{$^1$Department of Mathematics, COMSATS Institute of Information Technology, 47040, Wah Cantt., Pakistan}

\address{$^2$Weierstrass Institute for Applied Analysis and Stochastics, Leibniz Institute in Forschungsverbund Berlin e. V. (WIAS), Mohrenstr. 39, 10117
Berlin, Germany}

\address{$^3$Department of Mathematics, University of Wah, 47040, Wah Cantt., Pakistan}

\address{$^4$Department of Mathematics \& Statistics, FBAS, International Islamic University, 44000, Islamabad, Pakistan}

\ead{Wahab@ciitwah.edu.pk, Naveed.Ahmed@wias-berlin.de, Tasawar44@hotmail.com}

\begin{abstract}
A non-iterative topological sensitivity framework for guaranteed far field detection of a dielectric inclusion is presented. The cases of single and multiple  measurements of the electric far field scattering amplitude at a fixed frequency are taken into account. The performance of the algorithm is analyzed theoretically in terms of resolution, stability, and signal-to-noise ratio.   
\end{abstract}

\section{Introduction}

A thriving interest has been shown in topological sensitivity frameworks to procure solutions of assorted inverse problems especially for detecting small inhomogeneities and cracks embedded in homogeneous media \cite{Princeton, BG, CGGM, DG}. The impetus behind this curiosity is the robustness and simplicity of these algorithms. However, the understanding and utilization was somehow heuristic. For the first time, Ammari et al. rigorously debated these algorithms \cite{TDelastic, AGJK, wahab}. 

Our aim is to design and debate a far field detection algorithm based on topological sensitivity of small dielectric inclusion $D_\rho:=\bz_D+\rho O_D\subset\RR^3$ with position $\bz_D$, scale factor $\rho$ and a smooth bounded domain $O_D\subset\RR^3$.  The inclusion $D_\rho$ (with permittivity $\epsilon_1=\epsilon_0\epsilon_{1r}>0$ and permeability $\mu_0>0$) is assumed to be embedded in $\RR^3$ (with background permittivity $\epsilon_0>0$ and permeability  $\mu_0>0$, setting forth a non-permeable behavior (but the results extend to the case otherwise). Here $\epsilon_{1r}>0$ is the relative permittivity. The medium is probed by the incident electric field 
\begin{eqnarray}
\mathbf{E}_0(\bx):=\nabla_\bx\times(\theta^\perp e^{i\K\theta^T\bx}) = i\K\theta\times \theta^\perp e^{i\K\theta^T\bx}, \qquad \bx\in\RR^3,\label{E0}
\end{eqnarray}
where $\K=\omega\sqrt{\epsilon_0\mu_0}\in\RR_+$ is the wavenumber, $\omega>0$ is the angular frequency,  $\theta\in\mathbb{S}^2:=\{\bx\in\RR^3\,|\,\bx\cdot\bx=1\}$ and $\theta^\perp$ is any vector orthogonal to $\theta$. 
Then, the \emph{total} electric field  $\mathbf{E}_\rho$ in $\RR^3$ in the presence of $D_\rho$ satisfies
\begin{align}
&\nabla\times\nabla\times\mathbf{E}_\rho(\bx)-\K^2
(1-(1-\epsilon_{1r})\mathbf{\chi}_{D_\rho}(\bx))\mathbf{E}_\rho(\bx)= \mathbf{0}, \qquad \bx\in\RR^3,\label{Erho}
\\
&\ds\lim_{|\bx|\to\infty}|\bx|\left[\nabla\times\left(\mathbf{E}_\rho-\mathbf{E}_0\right)\times\widehat{\bx}-i\K\left(\mathbf{E}_\rho-\mathbf{E}_0\right)\right]=\mathbf{0},\label{ErhoSM}
\end{align}
where $\mathbf{\chi}_{D_\rho}(\bx)$ is the characteristic function of $D_\rho$ and $\widehat{\bv}:=\bv/|\bv|$ for all $\bv\in\RR^3\setminus\{\mathbf{0}\}$. Let $
\mathbf{E}^\infty_\rho\in L^2_t(\mathbb{S}^2):=\{\hat\bv\in L^2(\mathbb{S}^2)\,|\,\bv(\hbx)\cdot\hbx=0,\,\forall\,\hbx\in\mathbb{S}^2\}$
(called \emph{far-field amplitude}) be defined by 
\begin{equation}
\mathbf{E}_\rho(\bx)-\mathbf{E}_0(\bx)=\mathbf{E}^\infty_\rho\left(\hat\bx;\theta, \theta^\perp\right){e^{i\K|\bx|}}/{|\bx|}+O\left({1}/{|\bx|^2}\right), \quad \text{as }|\bx|\to\infty.
\end{equation} 
Then,  the following problems are dealt with henceforth.
\begin{enumerate}
\item Given the single data set $\{\mathbf{E}^\infty_\rho(\hat\bx; \theta), \,\forall\, \hat\bx\in\mathbb{S}^2, \theta\in\mathbb{S}^2 \}$, find location $\bz_D$ of inclusion $D_\rho$.

\item Given the multiple data set $\{\mathbf{E}^\infty_\rho(\hat\bx; \theta_j), \,\forall\, \hat\bx\in\mathbb{S}^2, \, (\theta_j)_{j=1}^n\in\mathbb{S}^2, \, n\in \mathbb{N}\}$, find location $\bz_D$.
\end{enumerate}

In Section \ref{Resolution}, a topological sensitivity based location search algorithm is presented and its resolution is ascertained. A stability analysis is performed in Section \ref{Stability}. 

\section{Topological sensitivity framework and resolution analysis}\label{Resolution}

Consider a search point $\bz_S\in\RR^3$ and create a trial inclusion $D_\delta=\bz_S +\delta O_S$ inside $\RR^3$ with permittivity $\epsilon_2=\epsilon_0\epsilon_{2r}>0$, scale factor $\delta>0$ and smooth domain $O_S\subset\RR^3$.
Let $\mathbf{E}_\delta$ be the electric field in $\RR^3$ in the presence of $D_\delta$ subject to incidence $\mathbf{E}_0$ and satisfying  \eqref{Erho}-\eqref{ErhoSM} with $(\epsilon_{1r},\chi_{D_\rho}, D_\rho)$ replaced with $(\epsilon_{2r},\chi_{D_\delta},D_\delta)$.  Let $\mathbf{E}^\infty_\delta$ be the far field amplitude of $\mathbf{E}_\delta$.

Define the misfit $\mathcal{J}[\mathbf{E}_0]$ and its topological derivative (imaging) functional ${\partial}_T\mathcal{J}[\mathbf{E}_0]$ by 
\begin{eqnarray}
\ds\mathcal{J}[\mathbf{E}_0](\bz_S)
:= \frac{1}{2}\int_{\mathbb{S}^2} \left| \mathbf{E}^\infty_{\rho}(\hat\bx;\theta,\theta^\perp)-\mathbf{E}^\infty_{\delta}(\hat\bx;\theta,\theta^\perp)\right|^2 ds(\hbx),
\quad
{\partial}_T\mathcal{J}[\mathbf{E}_0](\bz_S):= -\ds\frac{\partial\mathcal{J}[\mathbf{E_0}](\bz_S)}{\partial(\delta)^3}.\label{Dis}
\end{eqnarray}

It is claimed that $\bz_S\in\RR^3$ relative to which $\partial_T\mathcal{J}[\mathbf{E}_0]$ has a sharp peak is a candidate for $\bz_D$. The claim is justified in the sequel. Recall first the asymptotic expansion of $\mathbf{E}^\infty_{\rho}$ versus $\rho$ \cite{AV}:
\begin{align}
\ds \mathbf{E}^\infty_\rho(\hat\bx;\theta) 
=-{\K^2\rho^3}(4\pi)^{-1}
\left(\epsilon_{1r}^{-1}-1\right)
\left(\mathbf{I}-\hat\bx{\hat\bx}^T\right)
\mathbf{M}_\rho\mathbf{E}_0(\bz_D)
e^{-i\K\hbx^T\bz_D} +O(\rho^4),\label{AE1}
\end{align}
wherein $\mathbf{M}_\rho=\mathbf{M}_\rho(\epsilon_{1r}^{-1},O_D)$ is the polarization tensor. Note that $\mathbf{M}_\rho=3\epsilon_{1r}(\epsilon_{1r}+2)^{-1}|O_D|\mathbf{I}$ for a spherical domain, where $\mathbf{I}$ is the $3\times 3$ identity matrix and superposed $T$ indicates matrix transpose. It is worthwhile precising that $\mathbf{E}^\infty_\delta$ also admits an analogous expansion versus $\delta$. Further, consider the dyadic Green's function  
\begin{eqnarray}
\mathbf{\Gamma}(\bx,\by):=-\epsilon_0\left[\mathbf{I}+\K^{-2}\nabla_\bx\nabla_\bx^T\right] e^{i\K|\bx-\by|}/4\pi|\bx-\by|,
\end{eqnarray}
satisfying
$
\ds\nabla_\bx\times \nabla_\bx \times \mathbf{\Gamma}(\bx,\by)-\K^2 \mathbf{\Gamma}(\bx,\by)= -\epsilon_0\delta_\by(\bx)\mathbf{I}$ for all $\bx,\by\in\RR^3$ 
subject to the \emph{Silver-M\"uller} condition. Here $\delta_{\by}(\cdot)=\delta_0(\cdot-\by)$ is the Dirac mass at $\by$ and $\nabla\times$ acts on matrices column-wise. Moreover, for isotropic materials (see, for example, \cite{AILP})
\begin{eqnarray}
\mathbf{\Gamma}(\bx,\by)=  \mathbf{\Gamma}(\by,\bx) 
\quad\text{and}\quad
\nabla_\bx\times \mathbf{\Gamma}(\bx,\by)=\left[\nabla_\by\times \mathbf{\Gamma}(\by,\bx)\right]^T, \quad \forall \bx,\by\in\RR^3.\label{Pro1}
\end{eqnarray}
Finally, recall that the electric \emph{Herglotz} wave for all $\Phi\in L^2_t(\mathbb{S}^2)$ is defined by (\cite[Ch. 6]{colton}) 
\begin{eqnarray}
\mathcal{H}_E[\Phi](\bz):= \int_{\mathbb{S}^2}\Phi(\hbx)e^{i\K\hbx^T\bz} ds(\hbx), \quad \bz\in\RR^3.
\end{eqnarray}
\begin{lem}\label{lem1} For all search points $\bz_S\in\RR^3$ 
\begin{align}
\partial_T\mathcal{J}[\mathbf{E}_0](\bz_S)
=
-\ds\K^2(4\pi)^{-1}\left(\epsilon_{2r}^{-1}-1\right)\Re e\left\{
\overline{\mathcal{H}_E
[\mathbf{E}^\infty_\rho(\cdot;\theta)](\bz_S)}
\cdot \mathbf{M}_\delta
\mathbf{E}_0(\bz_S)\right\}.
\label{TDeps}
\end{align} 
\end{lem}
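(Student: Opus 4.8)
The plan is to expand the squared misfit pointwise, insert the $\delta$-expansion of $\mathbf{E}^\infty_\delta$, differentiate with respect to $(\delta)^3$, and recast the surviving surface integral by means of the Herglotz operator. First I would write
\begin{equation}
\left|\mathbf{E}^\infty_\rho-\mathbf{E}^\infty_\delta\right|^2
=\left|\mathbf{E}^\infty_\rho\right|^2
-2\Re e\left(\overline{\mathbf{E}^\infty_\rho}\cdot\mathbf{E}^\infty_\delta\right)
+\left|\mathbf{E}^\infty_\delta\right|^2,
\end{equation}
so that $\mathcal{J}[\mathbf{E}_0]$ splits into three surface integrals over $\mathbb{S}^2$. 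The first is independent of the trial inclusion, hence of $\delta$, and therefore does not contribute to $\partial/\partial(\delta)^3$.

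The next step is to invoke the expansion of $\mathbf{E}^\infty_\delta$ analogous to \eqref{AE1}, in which the scale enters explicitly through $\delta^3$ with remainder $O(\delta^4)$ and leading coefficient $-\K^2(4\pi)^{-1}(\epsilon_{2r}^{-1}-1)(\mathbf{I}-\hbx\hbx^T)\mathbf{M}_\delta\mathbf{E}_0(\bz_S)e^{-i\K\hbx^T\bz_S}$. Consequently the cross term is $O(\delta^3)$ while $|\mathbf{E}^\infty_\delta|^2$ is $O(\delta^6)$. Differentiating with respect to $(\delta)^3$ and letting $\delta\to0$ discards the last integral (its derivative being $O(\delta^3)$) and extracts precisely the $\delta^3$-coefficient of the cross term, giving
\begin{equation}
\partial_T\mathcal{J}[\mathbf{E}_0](\bz_S)
=-\K^2(4\pi)^{-1}(\epsilon_{2r}^{-1}-1)\,
\Re e\int_{\mathbb{S}^2}
\overline{\mathbf{E}^\infty_\rho(\hbx;\theta)}\cdot
(\mathbf{I}-\hbx\hbx^T)\mathbf{M}_\delta\mathbf{E}_0(\bz_S)\,
e^{-i\K\hbx^T\bz_S}\,ds(\hbx).
\end{equation}

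Finally I would use the tangentiality $\mathbf{E}^\infty_\rho\in L^2_t(\mathbb{S}^2)$, i.e. $\mathbf{E}^\infty_\rho(\hbx)\cdot\hbx=0$: since the projector $\mathbf{I}-\hbx\hbx^T$ is symmetric and fixes every tangential field, $(\mathbf{I}-\hbx\hbx^T)\overline{\mathbf{E}^\infty_\rho}=\overline{\mathbf{E}^\infty_\rho}$, so it may be dropped from the dot product. Pulling the constant vector $\mathbf{M}_\delta\mathbf{E}_0(\bz_S)$ out of the integral and recognising the residual $\hbx$-integral $\int_{\mathbb{S}^2}\overline{\mathbf{E}^\infty_\rho(\hbx;\theta)}\,e^{-i\K\hbx^T\bz_S}\,ds(\hbx)$ as $\overline{\mathcal{H}_E[\mathbf{E}^\infty_\rho(\cdot;\theta)](\bz_S)}$ then yields \eqref{TDeps}. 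I expect the main obstacle to be the careful bookkeeping of the projector and the verification that tangentiality renders it inert; a subsidiary point is justifying that the $\delta\to0$ limit isolates exactly the $\delta^3$-coefficient, which relies on the far-field expansion holding uniformly in $\hbx\in\mathbb{S}^2$.
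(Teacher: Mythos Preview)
Your proposal is correct and follows the same overall scaffolding as the paper: expand the misfit, insert the small-$\delta$ asymptotics of $\mathbf{E}^\infty_\delta$, observe that the $|\mathbf{E}^\infty_\delta|^2$ term is $O(\delta^6)$, and then read off the $\delta^3$-coefficient of the cross term.

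The one substantive difference lies in how you eliminate the projector $\mathbf{I}-\hbx\hbx^T$. You use the tangentiality $\mathbf{E}^\infty_\rho\in L^2_t(\mathbb{S}^2)$ directly: since the projector is symmetric, move it across the dot product onto $\overline{\mathbf{E}^\infty_\rho}$, where it acts as the identity. The paper instead rewrites $(\mathbf{I}-\hbx\hbx^T)\bu=-\hbx\times(\hbx\times\bu)$, converts the $\hbx$-dependence to $\nabla_{\bz_S}\times\nabla_{\bz_S}\times$ acting on the plane-wave kernel, and then invokes the Maxwell relation $\nabla\times\nabla\times\mathcal{H}_E[\Phi]=\K^2\mathcal{H}_E[\Phi]$ for tangential densities. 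Both routes ultimately rest on the same fact (tangentiality of the far-field amplitude), but yours is the more elementary and transparent one; the paper's detour through the curl--curl identity buys nothing extra for this lemma, though it does foreshadow the PDE manipulations used later when $\Im m\{\mathbf{\Gamma}\}$ is extracted.
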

\begin{proof}
Expanding misfit \eqref{Dis} and invoking expansion \eqref{AE1} for $\mathbf{E}^\infty_\delta$ render  
\begin{align}
\mathcal{J}&[\mathbf{E}_0](\bz_S) 
-\frac{1}{2}\int_{\mathbb{S}^2}\left|\mathbf{E}^\infty_\rho(\hbx;\theta)\right|^2ds(\hbx)
=
-\Re e\left\{\int_{\mathbb{S}^2}\mathbf{E}^\infty_\delta(\hbx;\theta)\cdot\overline{\mathbf{E}^\infty_\rho(\hbx;\theta)}ds(\hbx)\right\}
+O(\delta^6),
\nonumber
\\
=&
\frac{\K^2\delta^3}{4\pi}\left(\epsilon_{2r}^{-1}-1\right)
\Re e\Big\{\int_{\mathbb{S}^2}\left(\mathbf{I}-\hbx{\hbx}^T\right)
\mathbf{M}_\delta\mathbf{E}_0(\bz_D)
\cdot\overline{\mathbf{E}^\infty_\rho(\hbx;\theta)}e^{-i\K\hbx^T\bz_S} ds(\hbx)
\Big\} +O(\delta^4).
\label{TDexp}
\end{align}
Remark that for any matrix $\mathbf{A}$, vectors $\bu,\bv$,  density $\Phi\in L^2_t(\mathbb{S}^2)$ and $\bx,\bz\in\RR^3$
\begin{align}
&\mathbf{A}\bu\cdot\bv= \mathbf{A}^T\bv\cdot\bu, 
\qquad\qquad 
\left(\mathbf{I}-\hbx\hbx^T\right)\bu=-\hbx\times
(\hbx\times\bu)\label{relation1}
\\
&\K^2\hbx\times\left(\hbx\times \overline{\Phi(\hbx)}\right)e^{-i\K\hbx^T\bz}=- \nabla_{\bz}\times \nabla_{\bz}\times \left(\overline{\Phi(\hbx)e^{i\K\hbx^T\bz}}\right).\label{relation2}
\end{align}
Therefore, by the fact that $\nabla\times\nabla\times \mathcal{H}_E[\Phi]-\K^2\mathcal{H}_E[\Phi]=\mathbf{0}$ for   $\Phi\in L^2_t(\mathbb{S}^2)$ and by above identities
\begin{align*}
\mathcal{J}[\mathbf{E}_0](\bz_S)
&
-\frac{1}{2}\int_{\mathbb{S}^2}\left|\mathbf{E}^\infty_\rho(\hbx;\theta)\right|^2ds(\hbx)
=\ds\frac{\K^2\delta^3}{4\pi}\Big(\epsilon_{2r}^{-1}-1\Big)\Re e\left\{
\overline{\mathcal{H}_E
[\mathbf{E}^\infty_\rho](\bz_S)}
\cdot \mathbf{M}_\delta 
\mathbf{E}_0(\bz_S)\right\}+O(\delta^4).
\end{align*}
Finally, the proof is completed by taking limit as $\delta^3\to 0$.
\end{proof}

In order to substantiate the capability of $\partial_T\mathcal{J}[\mathbf{E}_0]$ to locate $\bz_D$, we elaborate $\mathcal{H}_E[\mathbf{E}^\infty_\rho](\bz)$ as
\begin{align}
\nonumber
\mathcal{H}_E
[\mathbf{E}^\infty_\rho(\cdot;\theta)](\bz)
=&\int_{\mathbb{S}^2} \mathbf{E}^\infty_\rho(\hbx;\theta)e^{i\K\hbx^T\bz}ds(\hbx),
\\
\nonumber
=&
-\frac{\K^2\rho^3}{4\pi}\left(\epsilon_{1r}^{-1}-1\right)\int_{\mathbb{S}^2} \left(\mathbf{I} -\hbx\hbx^T\right)e^{i\K\hbx^T(\bz-\bz_D)}
\mathbf{M}_\rho\mathbf{E}_0(\bz_D)ds(\hbx) +O(\rho^4),
\\
\nonumber
=&
-\frac{\K^2\rho^3}{4\pi}\left(\epsilon_{1r}^{-1}-1\right)\left[\left(\mathbf{I} -\frac{1}{\K^2}\nabla_\bz\nabla_\bz^T\right)\int_{\mathbb{S}^2} e^{i\K\hbx^T(\bz-\bz_D)}ds(\hbx)\right]
\mathbf{M}_\rho\mathbf{E}_0(\bz_D) +O(\rho^4)
\\
=&
{\K\rho^3}\epsilon_0^{-1}\left(\epsilon_{1r}^{-1}-1\right)\Im m\Big\{\mathbf{\Gamma}(\bz,\bz_D)\Big\}\mathbf{M}_\rho\mathbf{E}_0(\bz_D)+ O(\rho^4).\label{HEeps}
\end{align}
Indeed, the last identity is evident since  
\begin{equation}
\int_{\mathbb{S}^2}e^{i\K\hbx^T(\bz-\bz_D)}ds(\hbx)= j_0(\K|\bz-\bz_D|)= {4\pi}{\K^{-1}}\Im m\big\{{e^{i\K|\bz-\bz_D|}}/{4\pi|\bz-\bz_D|}\big\}, 
\end{equation}
where $j_n$ is the order $n$ spherical Bessel function of first kind and consequently, 
\begin{eqnarray}
\left(\mathbf{I}-\frac{1}{\K^2}\nabla_\bz\nabla_\bz^T\right)\int_{\mathbb{S}^2} e^{i\K\hbx^T(\bz-\bz_D)}ds(\hbx)= -\frac{4\pi}{\K\epsilon_0}\Im m\Big\{\mathbf{\Gamma}(\bz,\bz_D)\Big\}.\label{exp1} 
\end{eqnarray}
In fact, the following result is proved by virtue of Lemma \ref{lem1} and expansion \eqref{HEeps}.  
\begin{thm}\label{thm1}
 For all $\bz_S\in\RR^3$ and constant $C_\epsilon:= \left(\epsilon_{1r}^{-1}-1\right)\left(\epsilon_{2r}^{-1}-1\right)$
\begin{align}
\partial_T\mathcal{J}[\mathbf{E}_0](\bz_S)
=
\ds -{\rho^3\K^3C_\epsilon}(4\pi\epsilon_0)^{-1}\Re e\left\{
 \Im m\big\{\mathbf{\Gamma}(\bz_S,\bz_D)\big\}\mathbf{M}_\rho\overline{\mathbf{E}_0(\bz_D)}
\cdot \mathbf{M}_\delta
\mathbf{E}_0(\bz_S)\right\} +O(\rho^4).
\label{TDeps2}
\end{align}  
\end{thm}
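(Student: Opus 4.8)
The plan is to obtain \eqref{TDeps2} by inserting the asymptotic expansion \eqref{HEeps} of the Herglotz operator $\mathcal{H}_E[\mathbf{E}^\infty_\rho(\cdot;\theta)]$, evaluated at the search point $\bz=\bz_S$, directly into the representation \eqref{TDeps} furnished by Lemma \ref{lem1}. Since Lemma \ref{lem1} already reduces the topological derivative to a real-linear functional of $\overline{\mathcal{H}_E[\mathbf{E}^\infty_\rho(\cdot;\theta)](\bz_S)}$, only the conjugate of \eqref{HEeps} is needed, and the remainder of the argument is bookkeeping of real prefactors.

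First I would take the complex conjugate of \eqref{HEeps} at $\bz=\bz_S$. The crucial observation is that every scalar prefactor appearing there---the wavenumber $\K\in\RR_+$, the scale $\rho^3$, and the constants $\epsilon_0^{-1}$ and $(\epsilon_{1r}^{-1}-1)$---is real, and so is the polarization tensor $\mathbf{M}_\rho$ (real for a sphere by the stated formula, and real in general). Moreover $\Im m\{\mathbf{\Gamma}(\bz_S,\bz_D)\}$ is a real matrix, being the entrywise imaginary part of the dyadic Green's function, so it is invariant under conjugation. Consequently conjugation acts only on the incident field, yielding $\overline{\mathcal{H}_E[\mathbf{E}^\infty_\rho(\cdot;\theta)](\bz_S)}=\K\rho^3\epsilon_0^{-1}(\epsilon_{1r}^{-1}-1)\,\Im m\{\mathbf{\Gamma}(\bz_S,\bz_D)\}\,\mathbf{M}_\rho\,\overline{\mathbf{E}_0(\bz_D)}+O(\rho^4)$.

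Next I would substitute this into \eqref{TDeps}. The prefactor $-\K^2(4\pi)^{-1}(\epsilon_{2r}^{-1}-1)$ and the vector $\mathbf{M}_\delta\mathbf{E}_0(\bz_S)$ carried by Lemma \ref{lem1} are independent of $\rho$; since $\Re e\{\cdot\}$ is real-linear, all the real scalar constants factor outside it, the two wavenumber powers combine to $\K^3$, and the product $(\epsilon_{1r}^{-1}-1)(\epsilon_{2r}^{-1}-1)$ is recognized as $C_\epsilon$. Collecting the factor $(4\pi\epsilon_0)^{-1}$ then reproduces exactly the leading term of \eqref{TDeps2}. For the error, the $O(\rho^4)$ remainder from \eqref{HEeps} is paired against the fixed, bounded vector $\mathbf{M}_\delta\mathbf{E}_0(\bz_S)$ through a dot product followed by a continuous real part, so it contributes only $O(\rho^4)$ and the overall order is preserved.

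The main obstacle is conceptual rather than computational: one must track carefully which quantities are complex and which are real when moving constants across the conjugation and the real part. In particular, the transparency of the conjugation hinges on $\Im m\{\mathbf{\Gamma}\}$ and $\mathbf{M}_\rho$ being genuinely real-valued; were $\mathbf{M}_\rho$ complex (as for an absorbing inclusion), an additional conjugate would survive and the clean factorization through $C_\epsilon$ would break down. Verifying the reality of these objects, and confirming that the remainder in \eqref{HEeps} propagates without degradation of order, are the steps that deserve care.
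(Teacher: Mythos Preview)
Your proposal is correct and follows precisely the route the paper indicates: the text states that Theorem~\ref{thm1} ``is proved by virtue of Lemma~\ref{lem1} and expansion~\eqref{HEeps},'' and you carry out exactly that substitution, with a welcome explicit check that the real scalars, $\mathbf{M}_\rho$, and $\Im m\{\mathbf{\Gamma}\}$ pass unchanged through the conjugation. There is nothing to add.
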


\begin{rem} \label{rem1}
It immediately follows from Theorem \ref{thm1} that 
$$
\partial_T\mathcal{J}[\mathbf{E}_0](\bz_S)\propto \Im m\big\{\mathbf{\Gamma}(\bz_S,\bz_D)\big\} =  -\frac{\epsilon_0\K}{4\pi}\left[\frac{2}{3}j_0(\K r)\mathbf{I}+j_2(\K r)\left(\widehat{\mathbf{r}}\,\widehat{\mathbf{r}}^T-\frac{1}{3}\mathbf{I}\right)\right],
$$
where $\mathbf{r}:=\bz_S-\bz_D$ with $r:=|\mathbf{r}|$. Since $j_n(kr)=O(1/kr)$ as $kr\to \infty$ and $j_n(kr)=O((kr)^n)$ as $kr\to 0$, the functional $\bz_S\to \partial_T\mathcal{J}[\mathbf{E}_0](\bz_S)$ rapidly decays for $\bz_S$ away from $\bz_D$ and has a sharp peak when $\bz_S\to\bz_D$ with a focal spot size of half a wavelength of the incident wave.  Moreover, it synthesizes the sensitivity of $\mathcal{J}[\mathbf{E}_0](\bz_S)$ relative to the insertion of an inclusion at $\bz_S\in\Omega$. Heuristically, if the contrast $(\epsilon_{2r}-1)$ has the same sign as $(\epsilon_{1r}-1)$, then  $\mathcal{J}[\mathbf{E}_0](\bz_S)$ must observe the most pronounced decrease at the potential candidate $\bz_S\in\Omega$ for the true location $\bz_D$. In other words, $\bz_S\to \frac{\partial\mathcal{J}[\mathbf{E}_0]}{\partial(\delta^3)}(\bz_S)$ is expected to attain its most pronounced negative value; refer, for instance, to \cite{Bellis, BG} for detailed discussions on sign heuristic. Notice that $C_\epsilon$ is positive if the contrasts of true and trial inclusions have same signs. Consequently, by virtue of the decay property, $\partial_T\mathcal{J}[\mathbf{E}_0](\bz_S)=-\frac{\partial\mathcal{J}[\mathbf{E}_0]}{\partial(\delta^3)}(\bz_S)$
assumes its maximum positive value when $\bz_S\to\bz_D$. In a nutshell, $\partial_T\mathcal{J}[\mathbf{E}_0]$ achieves the detection of $\bz_D$ and resolution limit.
\end{rem}

\subsection{Imaging with multiple measurement}

Let $(\theta_j)_{j=1}^n\in\mathbb{S}^2$ be equidistributed directions and $\theta_j^{\perp,\ell}$ ($\ell=1,2$) be such that $\{\theta_j,\theta_j^{\perp,1},\theta_j^{\perp,2}\}$ is an orthonormal basis of $\RR^3$. We define the incident fields  and the topological sensitivity functional by 
\begin{equation}\label{MultiFunc}
\mathbf{E}_0^{j,\ell}:=i\K\theta_j\times\theta_j^{\perp,\ell}e^{i\K\theta^T_j\bx}
\quad\text{and}\quad
\partial_T\mathcal{J}(\bz_S):= \frac{1}{n}\sum_{\ell=1}^2\sum_{j=1}^n\partial_T\mathcal{J}[\mathbf{E}_0^{j,\ell}](\bz_S). 
\end{equation}
As for $n$ sufficiently large $\frac{1}{n}\sum_{j=1}^n e^{i\K\theta_j^T(\bx-\by)}\approx j_0(\K|\bx-\by|)$ and $\big\{\theta_j,\theta_j^{\perp,1},\theta_j^{\perp,2}\big\}$ is a basis of $\RR^3$, therefore
\begin{align}
\nonumber
\frac{1}{n}\sum_{\ell=1}^2\sum_{j=1}^n e^{i\K\theta_j^T(\bx-\by)} &
\theta_j^{\perp,\ell} \left(\theta_j^{\perp,\ell}\right)^T
= \frac{1}{n}\sum_{j=1}^n \left(\mathbf{I}-\theta_j\theta_j^T\right)
 e^{i\K\theta_j^T(\bx-\by)}
\\
\nonumber
=& \frac{1}{n}\sum_{j=1}^n\left(\mathbf{I}-\frac{1}{\K^2}\nabla_\bx\nabla_\bx^T\right) e^{i\K\theta_j^T(\bx-\by)}
\approx
-{4\pi}{(\K\epsilon_0)}^{-1}\Im m\big\{\mathbf{\Gamma}(\bx,\by)\big\}.\label{theta1}
\end{align}
Similarly,  
\begin{equation}
\frac{1}{n}\sum_{\ell=1}^2\sum_{j=1}^n  e^{i\K\theta_j^T(\bx-\by)} \left(\theta_j\times \theta_j^{\perp,\ell}\right)\left(\theta_j\times \theta_j^{\perp,\ell}\right)^T
\approx
-{4\pi}{(\K\epsilon_0)}^{-1}\Im m\big\{\mathbf{\Gamma}(\bx,\by)\big\}.\label{theta2}
\end{equation}
Following result holds.
\begin{thm}\label{thm2}
Let $\bz_S\in\RR^3$, $n\in\mathbb{N}$ be sufficiently large and $\mathbf{A}:\mathbf{B}=\sum_{i,j=1}^3 \mathbf{A}_{ij}\mathbf{B}_{ij}$. Then
\begin{align}
\partial_T\mathcal{J}(\bz_S)
\approx {\rho^3\K^4}\epsilon_0^{-2} C_\epsilon\Re e\Big\{\Im m\left\{\mathbf{\Gamma}(\bz_S,\bz_D)\right\} \mathbf{M}_{\rho}: \mathbf{M}_{\delta}\Im m\left\{\mathbf{\Gamma}(\bz_S,\bz_D)\right\} \Big\}+ O(\rho^4).
\label{TDmulti2}
\end{align}
\end{thm}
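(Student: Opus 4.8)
The plan is to express the multiple-measurement functional as an averaged superposition of single-incidence functionals and then to collapse the directional sum by means of the dyadic closure relation \eqref{theta2}. Writing $\mathbf{G}:=\Im m\{\mathbf{\Gamma}(\bz_S,\bz_D)\}$ for brevity and recalling from Remark~\ref{rem1} that $\mathbf{G}$ is real and symmetric (as are the polarization tensors $\mathbf{M}_\rho,\mathbf{M}_\delta$), I would first apply formula \eqref{TDeps2} of Theorem~\ref{thm1} to each incident field $\mathbf{E}_0^{j,\ell}$. Using the adjoint relation \eqref{relation1} with the symmetry of $\mathbf{G}$ and $\mathbf{M}_\rho$, the bracketed term in \eqref{TDeps2} becomes the bilinear form $\overline{\mathbf{E}_0^{j,\ell}(\bz_D)}^{\,T}\mathbf{M}_\rho\,\mathbf{G}\,\mathbf{M}_\delta\,\mathbf{E}_0^{j,\ell}(\bz_S)$.

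Next I would insert the explicit fields $\mathbf{E}_0^{j,\ell}(\bz)=i\K(\theta_j\times\theta_j^{\perp,\ell})e^{i\K\theta_j^T\bz}$, whose polarization vector $\theta_j\times\theta_j^{\perp,\ell}$ is real. Evaluating at $\bz=\bz_D$ and $\bz=\bz_S$ and using $(-i\K)(i\K)=\K^2$, each summand reduces to the scalar $\K^2 e^{i\K\theta_j^T(\bz_S-\bz_D)}\,(\theta_j\times\theta_j^{\perp,\ell})^T\mathbf{M}_\rho\,\mathbf{G}\,\mathbf{M}_\delta(\theta_j\times\theta_j^{\perp,\ell})$. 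Recognizing the bilinear form as a Frobenius contraction, $\mathbf{p}^T\mathbf{A}\mathbf{p}=\mathbf{A}:(\mathbf{p}\mathbf{p}^T)$ with $\mathbf{p}=\theta_j\times\theta_j^{\perp,\ell}$, lets me pull the fixed matrix $\mathbf{M}_\rho\mathbf{G}\mathbf{M}_\delta$ out of the sum. Averaging over $j,\ell$ and using the linearity of both $\Re e$ and the contraction yields
\[
\partial_T\mathcal{J}(\bz_S)=-\frac{\rho^3\K^5C_\epsilon}{4\pi\epsilon_0}\Re e\Big\{(\mathbf{M}_\rho\mathbf{G}\mathbf{M}_\delta):\tfrac{1}{n}\textstyle\sum_{\ell,j}e^{i\K\theta_j^T(\bz_S-\bz_D)}(\theta_j\times\theta_j^{\perp,\ell})(\theta_j\times\theta_j^{\perp,\ell})^T\Big\}+O(\rho^4).
\]
The inner averaged dyadic sum is precisely the left-hand side of \eqref{theta2} with $\bx=\bz_S,\ \by=\bz_D$, hence is approximately the real matrix $-\tfrac{4\pi}{\K\epsilon_0}\mathbf{G}$; since the surviving prefactors are real, $\Re e$ commutes with the contraction and the approximation may be substituted directly.

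The remaining task is purely algebraic: after substitution the constants combine as $\big(-\tfrac{\rho^3\K^5C_\epsilon}{4\pi\epsilon_0}\big)\big(-\tfrac{4\pi}{\K\epsilon_0}\big)=\rho^3\K^4\epsilon_0^{-2}C_\epsilon$, leaving $\rho^3\K^4\epsilon_0^{-2}C_\epsilon\,\Re e\{(\mathbf{M}_\rho\mathbf{G}\mathbf{M}_\delta):\mathbf{G}\}$. To match the stated form I would rewrite this contraction via $\mathbf{A}:\mathbf{B}=\mathrm{tr}(\mathbf{A}^T\mathbf{B})$, the symmetry of $\mathbf{G},\mathbf{M}_\rho,\mathbf{M}_\delta$, and the cyclicity of the trace to obtain $(\mathbf{M}_\rho\mathbf{G}\mathbf{M}_\delta):\mathbf{G}=\mathrm{tr}(\mathbf{M}_\rho\mathbf{G}\mathbf{M}_\delta\mathbf{G})=(\mathbf{G}\mathbf{M}_\rho):(\mathbf{M}_\delta\mathbf{G})$, which is exactly \eqref{TDmulti2}. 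I expect the only delicate points to be the transpose/symmetry bookkeeping needed to identify the dyadic sum with the left-hand side of \eqref{theta2}, and the justification that applying $\Re e$ before invoking the real-valued approximation \eqref{theta2} is legitimate; once these are settled, the constant arithmetic and the final trace rearrangement are routine.
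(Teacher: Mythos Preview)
Your proposal is correct and follows essentially the same route as the paper's own proof: apply Theorem~\ref{thm1} to each $\mathbf{E}_0^{j,\ell}$, substitute the explicit plane-wave form, convert the resulting bilinear form into a Frobenius contraction via $\mathbf{A}\mathbf{p}\cdot\mathbf{p}=\mathbf{A}:\mathbf{p}\mathbf{p}^T$, and then collapse the directional average by \eqref{theta2}. The paper is terser (it omits the symmetry/trace bookkeeping you spell out to match the exact form of \eqref{TDmulti2}), but the ingredients and their order are the same.
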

\begin{proof}
Since $\mathbf{A}\theta\cdot\theta =\mathbf{A}: \theta\theta^T$, the approximation \eqref{theta2} yields
\begin{align*}
&\partial_T\mathcal{J}(\bz_S)=
-\frac{\rho^3\K^3 C_\epsilon}{4\pi\epsilon_0 n}\sum_{\ell=1}^2\sum_{j=1}^n
\Re e\Big\{
 \Im m\left\{\mathbf{\Gamma}(\bz_S,\bz_D)\right\}\mathbf{M}_\rho
\overline{\mathbf{E}_0^{j,\ell}(\bz_D)}
\cdot \mathbf{M}_\delta\mathbf{E}_0^{j,\ell}(\bz_S)\Big\} +O(\rho^4),
\\
=& 
- \frac{\rho^3\K^5C_\epsilon}{4\pi\epsilon_0 n}\sum_{\ell=1}^2\sum_{j=1}^n
\Re e\Big\{
\Im m\left\{\mathbf{\Gamma}(\bz_S,\bz_D)\right\}\mathbf{M}_\rho
:\mathbf{M}_\delta(\theta_j\times\theta_j^{\perp,\ell})(\theta_j\times\theta_j^{\perp,\ell})^Te^{i\K\theta_j^T(\bz_S-\bz_D)}
\Big\} +O(\rho^4),
\\
\approx &\phantom{-}\, 
{\rho^3\K^4}\epsilon_0^{-2}C_\epsilon
\Re e\big\{
\Im m\left\{\mathbf{\Gamma}(\bz_S,\bz_D)\right\}\mathbf{M}_\rho
:\mathbf{M}_\delta\Im m\left\{\mathbf{\Gamma}(\bz_S,\bz_D)\right\}
\big\}+O(\rho^4).
\end{align*}
\end{proof}
Theorem \ref{thm2} substantiates that $\partial_T\mathcal{J}(\bz_S)\propto \Im m\left\{\mathbf{\Gamma}(\bz_S,\bz_D)\right\} $ and has a peak (sharper than that of $\partial_T\mathcal{J}[\mathbf{E}_0]$) when $\bz_S\to\bz_D$ (see Remark \ref{rem1}). 

\section{Stability with respect to measurement noise}\label{Stability}

The aim in this section is to establish stability of the imaging functional $\partial_T\mathcal{J}$. Assume that the measurements of the far field amplitude are corrupted by a mean-zero circular Gaussian noise $\etab:\mathbb{S}^2\to\mathbb{C}^3$, that is, 
$\mathbf{E}_\rho^\infty(\hbx):= \wE^{\infty}_\rho(\hbx)+\etab(\hbx)$ for all $\hbx\in\mathbb{S}^2$, where $\mathbf{E}_\rho^\infty$ is the corrupted value. Henceforth, a superposed $\sim$ indicates true value without noise corruption.  Further, let $\sigmae$ be the noise covariance of $\etab$ such that for all $\hby,\hby'\in\mathbb{S}^2$
\begin{eqnarray}
\mathbb{E}\left[\etab(\hby)\overline{\etab(\hby')}^T\right]
=\sigmae^2\delta_\hby(\hby')\mathbf{I}
\qquad\text{and}\qquad
\mathbb{E}\left[\etab^j(\hby)\overline{\etab^{j'}(\hby')}^T\right]
=\sigmae^2\delta_{jj'}\delta_\hby(\hby')\mathbf{I},\label{EEcor}
\end{eqnarray}
where $j$ and $j'$ indicate the $j$-th and $j'$-th measurements and $\delta_{jj'}$ is Kronecker's delta. 

The noise affects $\partial_T\mathcal{J}[\mathbf{E}_0]$ and $\partial_T\mathcal{J}$ by means of Herglotz wave, $\mathcal{H}_E[\mathbf{E}^\infty_\rho]=\mathcal{H}_E[\wE^\infty_\rho]+   \mathcal{H}_E[\etab]$. The term $\mathcal{H}_E[\wE^\infty_\rho]$ is independent of noise and yields true image; discussed earlier. On the other hand, $\mathcal{H}_E[\etab]$ is a circular Gaussian random process with mean-zero by linearity and by definition of $\etab$. Furthermore, for all $\hbz,\hbz'\in\mathbb{S}^2$ 
\begin{align}
\mathbb{E}\Big[\mathcal{H}_E[\etab](\bz)
\overline{\mathcal{H}_E[\etab](\bz')}^T\Big]
=&\mathbb{E}\left[\iint_{\mathbb{S}^2\times\mathbb{S}^2}
\etab(\hbx)\overline{\etab(\hby)^T} e^{i\K\hbx^T\bz} e^{-i\K\hby^T\bz'} ds(\hbx)ds(\hby)\right],
\nonumber
\\
=&\mathbb{E}\left[\iint_{\mathbb{S}^2\times\mathbb{S}^2}
\hbx\times(\etab(\hbx)\times\hbx)\overline{\etab(\hby)^T} e^{i\K\hbx^T\bz} e^{-i\K\hby^T\bz'} ds(\hbx)ds(\hby)\right].
\nonumber
\end{align}
To get the latter identity, the fact that $\etab\in L^2_T(\mathbb{S}^2)$ is used. By virtue of \eqref{relation1}-\eqref{relation2}, one gets
\begin{align}
\mathbb{E}\Big[\mathcal{H}_E[\etab](\bz)
\overline{\mathcal{H}_E[\etab](\bz')}^T\Big]
=&\mathbb{E}\left[\iint_{\mathbb{S}^2\times\mathbb{S}^2}
(\mathbf{I}-\hbx\hbx^T)\etab(\hbx)\overline{\etab(\hby)^T} e^{i\K\hbx^T\bz} e^{-i\K\hby^T\bz'} ds(\hbx)ds(\hby)\right],
\nonumber
\\
=&\sigmae^2\left(\mathbf{I}-\frac{1}{\K^2}\nabla_\bz\nabla_\bz^T\right) \int_{\mathbb{S}^2}e^{i\K\hbx^T(\bz-\bz')}ds(\hbx)  =-\frac{4\pi\sigmae^2}{\K\epsilon_0}\Im m\big\{\mathbf{\Gamma}(\bz,\bz')\big\} .\label{Cov1}
\end{align}
Thus, $\mathcal{H}_E[\etab]$ precipitates a speckle pattern, that is, a random cloud of hot spots having typical diameters of the order of wavelength and amplitudes of the order of $\sigmae /\sqrt{\K}$.  The covariance of corrupted image is furnished by Theorem \ref{thm1} and expression \eqref{Cov1} as
\begin{align*}
{\rm Cov}
\big(\partial_T\mathcal{J}(\bz),&\partial_T\mathcal{J}(\bz')\big)
\\
=&
\frac{a_\epsilon^2}{n^2}\sum_{\ell,\ell'=1}^2\sum_{j,j'=1}^n
\mathbb{E}\Big[
\ds\Re e\left\{
\overline{\mathcal{H}_E
[\etab^{j,\ell}](\bz)}
\cdot \mathbf{M}_\delta
\mathbf{E}_0^{j,\ell}(\bz)\right\}
\Re e\left\{
\overline{\mathcal{H}_E
[\etab^{j',\ell'}](\bz')}
\cdot \mathbf{M}_\delta
\mathbf{E}_0^{j',\ell'}(\bz')\right\}
\Big],
\\
=
&
\frac{a_\epsilon^2}{2n^2}\sum_{\ell=1}^2\sum_{j=1}^n
\ds\Re e\left\{\mathbf{M}_\delta
\mathbf{E}_0^{j,\ell}(\bz)\cdot 
\mathbb{E}\Big[
\overline{\mathcal{H}_E
[\etab^{j,\ell}](\bz)}
\Big(\mathcal{H}_E [\etab^{j,\ell}](\bz')\Big)^T\Big]
\mathbf{M}_\delta
\overline{\mathbf{E}_0^{j,\ell}(\bz')}\right\},
\\
=
&
-\frac{2\pi a_\epsilon^2\sigmae^2}{n^2\K\epsilon_0}\sum_{j=1}^n
\ds\Re e\left\{\mathbf{M}_\delta
\mathbf{E}_0^{j,\ell}(\bz)\cdot 
\Im m\big\{\mathbf{\Gamma}(\bz,\bz')\big\}
\mathbf{M}_\delta
\overline{\mathbf{E}_0^{j,\ell}(\bz')}\right\},
\end{align*}
where  $a_\epsilon= \K^2(4\pi)^{-1} \left(\epsilon_{2r}^{-1}-1\right)$ and the identities \eqref{EEcor}-\eqref{Cov1} are invoked.

Now substituting expression for $\mathbf{E}^{j,\ell}_0$ and using approximation \eqref{theta2}, we arrive at
\begin{align*}
{\rm Cov}\big(\partial_T&\mathcal{J}(\bz),\partial_T\mathcal{J}(\bz')\big)
\\
=
&
-\frac{2\pi a_\epsilon^2\sigmae^2\K}{n^2\epsilon_0}\sum_{\ell=1}^2\sum_{j=1}^n
\ds\Re e\left\{\mathbf{M}_\delta(\theta_j\times\theta_j^{\perp,\ell})\cdot 
\Im m\big\{\mathbf{\Gamma}(\bz,\bz')\big\}
\mathbf{M}_\delta
(\theta_j\times\theta_j^{\perp,\ell}) e^{i\K\theta^T(\bz-\bz')}\right\},
\\
\approx
&
{\K^4\sigmae^2(\epsilon_{2r}^{-1}-1)^2}(2n)^{-1}\epsilon_0^{-2}
\ds\Re e\Big\{\mathbf{M}_\delta \Im m\big\{\mathbf{\Gamma}(\bz,\bz')\big\}: 
\Im m\big\{\mathbf{\Gamma}(\bz,\bz')\big\}
\mathbf{M}_\delta
\Big\},
\end{align*}
In order to fathom the statistics of corrupted image, assume for an instance that $D_\delta$ is a sphere so that $\mathbf{M}_\delta=3(2\epsilon_{2r}^{-1}+1)^{-1}|O_S|\mathbf{I}$. Then, it can be readily verified that
\begin{align}
{\rm Cov}\big(\partial_T\mathcal{J}(\bz),\partial_T\mathcal{J}(\bz')\big)
\approx
\ds{b_\epsilon^2\sigmae^2\K^4}{(2n)^{-1}}\left\|\Im m\big\{\mathbf{\Gamma}(\bz,\bz')\big\}\right\|^2,\label{Cov2}
\end{align}
where  ${b}_\epsilon= 3|O_S|(\epsilon_{2r}^{-1}-1)/\epsilon_0(2\epsilon_{2r}^{-1}+1) $ and  $\|\mathbf{A}\|:= \sqrt{\mathbf{A}:\mathbf{A}}$ is the \emph{Frobenius} norm of $\mathbf{A}$. 
\begin{rem}\label{rem3}
The above analysis elucidates that noise perturbation of image is typically of order $\sigmae/\sqrt{2n}$ and the hot spots have shape identical with that of actual peak procured by $\partial_T\mathcal{J}$ subject to accurate measurements. The main spike of $\partial_T\mathcal{J}$ is not altered. Further, the dependence of typical perturbation size on  $1/\sqrt{2n}$ indicates that the topological sensitivity based imaging framework is more stable when  multiple measurements are available at hand. 
\end{rem}

\subsection{Signal-to-noise ratio}

It follows immediately from \eqref{Cov2} that the variance of  $\partial_T\mathcal{J}$ at $\bz\in\RR^3$ can be approximated by 
\begin{eqnarray}
\label{var1} 
{\rm Var}\big(\partial_T\mathcal{J}(\bz)\big)\approx {b_\epsilon^2\sigmae^2\K^4}{(2n)^{-1}}
\big\|\Im m\big\{\mathbf{\Gamma}(\bz,\bz)\big\}
\big\|^2.
\end{eqnarray}
Specifically when $D_\rho$ is spherical inclusion, the signal-to-noise ratio (SNR) defined by 
$
{\rm SNR}= \mathbb{E}\left[\partial_T\mathcal{J}(\bz_D)\right]\left({\rm Var}\left[\partial_T\mathcal{J}(\bz_D)\right]\right)^{-1/2},
$
can be approximated by virtue of \eqref{var1} and Theorem \ref{thm2} as
\begin{eqnarray}
{\rm SNR}\approx 3\sqrt{2n}\sigmae^{-1}\epsilon_0^{-1}|2\epsilon_{1r}^{-1}+1|^{-1}|\epsilon_{1r}^{-1}-1|\rho^3|O_D|\K^2\big\|\Im m\big\{\mathbf{\Gamma}(\bz_D,\bz_D)\big\}\big\|.\label{SNR}
\end{eqnarray}
The expression \eqref{SNR} indicates that signal-to-noise ratio depends directly on contrast, volume of the inclusion, the operating frequency and inversely on noise covariance.
\section{Conclusion}

The topological sensitivity functions presented herein for detecting a dielectric inclusion, from single or multiple measurements of electric far field amplitude, are proved to be very efficient in terms of resolution, stability and signal-to-noise ratio. In fact, the Rayleigh's resolution limit is achieved and stability with respect to measurement noise is substantiated. The stability with respect to medium noise is not dealt with, however,  moderate stability can be guaranteed under Born approximation. The reader is referred, for instance, to \cite{TDelastic, wahab} in this regard.

\section*{References}

\end{document}